\newtheorem{theorem}{Theorem}
\theoremstyle{remark}
\newtheorem*{remark}{Remark}
\newcommand{\orcid}[1]{\href{#1}{\includegraphics[height=.8em]{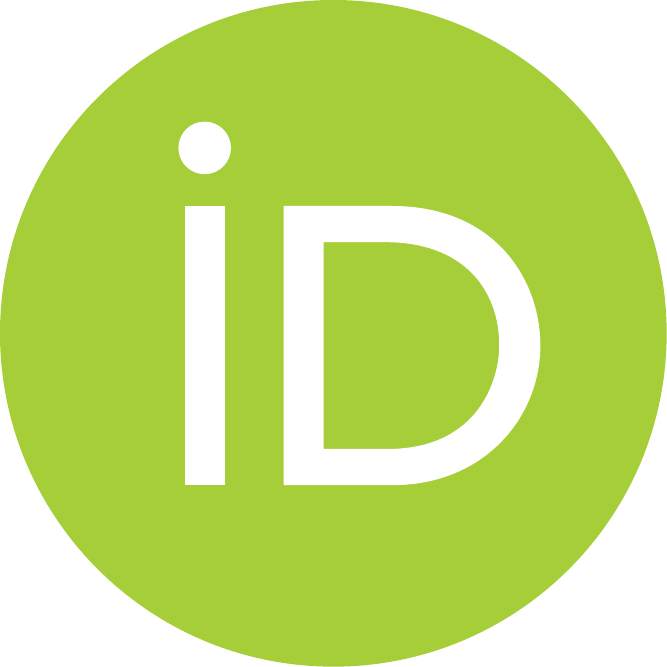}}}
\def\Norm#1{\|{#1}\|}
\def\>{\rangle}
\def\<{\langle} 
   \def\rA{{\rm A}}\def\rB{{\rm
    B}}
\def\Stset{{\mathsf{St}}}
 \def\Span{\mathsf{Span}}
  \def\Tr{{\rm Tr}}
\newcolumntype{?}{!{\vrule width 1.2pt}}
\definecolor{blue1}{rgb}{0.03, 0.27, 0.49}
\DeclareMathOperator{\Supp}{Supp}{}
\def\FockSpace{\mathcal{F}}
\def\Perr{\mathcal{P}_{\text{err}}}
\def\LOCCSet{\mathsf{LOCC}}
\def\SEPSet{\mathsf{SEP}}
\begin{document}

\title{Fermionic state discrimination by local operations and classical communication}

\author{Matteo \surname{Lugli} \orcid{http://orcid.org/0000-0003-0554-3760}}

\email{matteo.lugli01@ateneopv.it}

\affiliation{QUIT group, Dipartimento di Fisica, Università di Pavia, and INFN Sezione di Pavia, via Bassi 6, 27100 Pavia, Italy}

\author{Paolo \surname{Perinotti} \orcid{http://orcid.org/0000-0003-4825-4264}}

\email{paolo.perinotti@unipv.it}

\affiliation{QUIT group, Dipartimento di Fisica, Università di Pavia, and INFN Sezione di Pavia, via Bassi 6, 27100 Pavia, Italy}

\author{Alessandro \surname{Tosini} \orcid{http://orcid.org/0000-0001-8599-4427}}

\email{alessandro.tosini@unipv.it}

\affiliation{QUIT group, Dipartimento di Fisica, Università di Pavia, and INFN Sezione di Pavia, via Bassi 6, 27100 Pavia, Italy}

\begin{abstract}
    \acrodef{LOCC}{local operations and classical communication}
    \acrodef{POVM}{positive-operator valued measure}
    \acrodef{SEP}{separable effects}\acused{SEP}
    We consider the problem of \ac{LOCC} discrimination between two bipartite pure states of Fermionic systems.
    We show that, contrarily to the case of quantum systems, for Fermionic systems it is generally not possible to achieve the ideal state discrimination performances through \ac{LOCC} measurements.
    On the other hand, we show that an ancillary system made of two Fermionic modes in a maximally entangled state is a sufficient additional resource to attain the ideal performances via \ac{LOCC} measurements.
    The stability of the ideal results is studied when the probability of preparation of the two states is perturbed, and a tight bound on the discrimination error is derived.
\end{abstract}

\maketitle

The very concept of quantum information theory requires encoding distinguishable pieces of information on quantum states.
In the simplest instance of encoding of classical information, the decoding procedure corresponds to the widely studied task of quantum state discrimination~\cite{Helstrom1969,Helstrom1976,Ivanovic1987,Dieks1988,Peres1988,Chefles2000,Bergou2007,Barnett2009}.
In turn, the state discrimination task has been extensively studied in the scenario where states are shared by distant agents that are only allowed to use \acf{LOCC}~\cite{Walgate2000,Virmani2001,Jonathan1999}.
These tasks are now exhaustively understood in the quantum realm.

On the other hand, real physical systems are Bosons or Fermions, and the latter are ruled by a theory that is a slight variation of the standard quantum one.
The study of information processing in Fermionic theory has then various reasons, that are both practical and fundamental~\cite{Bravyi2002}.
Of particular importance is establishing analogies and differences between quantum and Fermionic implementation of specific information processing tasks.
For example, it is known that quantum and Fermionic computation are equivalent, meaning that every quantum algorithm can be efficiently mapped to a Fermionic one, and viceversa~\cite{Bravyi2002}.
This implies, e.g., that Fermionic processes are efficiently simulated by quantum computers~\cite{Jordan2014}.
In many other respects, however, the two theories present significant differences~\cite{Wolf2006,Banuls2007,DAriano2014a}.

In the present Letter, we study the task of \ac{LOCC} state discrimination in the Fermionic theory.
We show that, unlike the quantum case, in the typical situation \ac{LOCC} discrimination is strictly suboptimal.
We also derive conditions where ideal discrimination performances can be achieved via a \ac{LOCC} protocol.
These conditions are very sensitive to prior information about the probability of occurrence of the two states.
Therefore, we study the behavior of \ac{LOCC} protocols in the presence of a small perturbation of the ideal conditions.
Moreover, we show that a pair of Fermionic systems in a maximally entangled state is a sufficient resource in addition to \ac{LOCC} to achieve discrimination performances equivalent to the optimal one.

We briefly introduce the Fermionic quantum theory as the theory dealing with systems made of local Fermionic modes~\cite{Bravyi2002,Wolf2006,Banuls2007,Friis2013,DAriano2014}.
A Fermionic mode represents the counterpart of a qubit in the quantum theory and can be either empty or occupied by a single ``excitation.''
The states of Fermionic systems satisfy the \emph{parity superselection rule}~\cite{Wick1952,Hegerfeldt1968,Schuch2004,Kitaev2004,Schuch2004a,DAriano2014,DAriano2014a,Bravyi2002}, i.e., superpositions of vectors having even or odd excitation numbers are forbidden.
The latter can be derived as a consequence of the assumption that the elements of the Fermionic algebra are
Kraus operators of local Fermionic transformations~\cite{DAriano2014}.
The generators of the Fermionic algebra $\varphi_i$, $i$ running over arbitrary sets of  $N$ modes, fulfil the canonical anticommutation relations $\{\varphi_i, \varphi_j^\dagger\} = \delta_{ij}$ and $\{\varphi_i, \varphi_j\} = \{\varphi_i^\dagger, \varphi_j^\dagger\} = 0 \ \forall i, j$.
Once we define the vacuum state $\ket{\Omega}$ as the common eigenvector of operators $\varphi^\dag_i\varphi_i$ with null eigenvalues, the Fermionic operators enable us to define the Fock states as $\ket{n_1\ldots n_N} \coloneqq (\varphi_1^\dagger)^{n_1} \cdots (\varphi_N^\dagger)^{n_N} \ket{\Omega}$ and the antisymmetrized Fock space $\FockSpace$ through the linear combination of all Fock states.
We may label with the lowercase letters $e$, $o$ those sectors of the Fock space featuring even and odd parity, respectively.
The Jordan-Wigner isomorphism~\cite{Jordan1928,Verstraete2005,Pineda2010} is a crucial tool to handle the transformations and informational protocols in Fermionic theory.
Indeed, it maps non-locally the Fermionic operator algebra to an algebra of transformations on qubits, thus allowing us to proceed with the usual quantum notation.

\section{The orthogonal case}
In quantum theory, we may perfectly discriminate between any two orthogonal states $\ket\psi$, $\ket\phi$ of a bipartite system $\rA\rB$ through \ac{LOCC} measurements~\cite{Walgate2000}.
We remind that the most general case of a quantum measurement is represented by a \ac{POVM}, i.e., a collection of effects (positive operators $0 \le S \le I$) that sum to the identity operator $I$.
A necessary condition for a \ac{POVM} to represent a \ac{LOCC} measurement is to be separable (\ac{SEP}).
The effect $S$ is separable if there exists some operators $0 \le A_i, B_i \le I$ such that $S = \sum_i A_i \otimes B_i$, and a \textsc{povm} represents a separable measurement if it is exclusively made of separable effects.
Moreover, we recall that \ac{LOCC} \acp{POVM} are a proper subset of \ac{SEP} \acp{POVM}~\cite{Bennett1999}.
In the following we will use the acronyms \ac{LOCC} and \ac{SEP} to denote the corresponding subsets of \acp{POVM}.

We now give a sketchy summary of the result of Ref.~\cite{Walgate2000}. 
Every pair of orthogonal bipartite pure states can then be written as
\begin{equation}\label{eq:walg}
	\ket{\psi} =\sum_{i = 1}^n \ket{i}_\rA \ket{\eta_i}_\rB,\quad \ket{\phi} =\sum_{i = 1}^n \ket{i}_\rA \ket{\nu_i}_\rB ,
\end{equation}
where $\{\ket{i}_\rA\}$ is a suitable orthonormal basis in the Hilbert space of Alice's system, and 
$\{\ket{\eta_i}_\rB\}$ and $\{\ket{\nu_i}_\rB\}$ are sets of vectors in Bob's Hilbert space that are pairwise orthogonal, i.e., $\braket{\eta_i|\nu_i}=0$.
Alice has to measure her system in the given basis and send the outcome to Bob, who in turn manages to locally discriminate between two orthogonal states, thus inferring the correct result.
Existence of the decomposition in Eq.~\eqref{eq:walg} was shown in Ref.~\cite{Walgate2000}.

We follow here a strategy similar to the quantum one in order to distinguish between two pure orthogonal states $\ket\psi$, $\ket\phi$ of a bipartite Fermionic system.
First of all, we notice that whenever the two preparations have different parity, e.g., $\ket{\psi} \in \FockSpace_e (\rA\rB)$ and $\ket{\phi} \in \FockSpace_o (\rA\rB)$, it is always possible to perfectly discriminate between the two just through local measurements.
Indeed, Alice and Bob have to locally measure the parity of their subsystems and if their outcomes match, then the provided state was even, otherwise it was the odd one.
The nontrivial case then is that of two pure states with the same parity.
Since the even and odd sector are equivalent under \ac{LOCC}, it is not restrictive to focus on even vectors only.
We introduce the following convenient notation for the even vectors $\ket{\psi}, \ket{\phi} \in \FockSpace_e(\rA\rB)$,
\begin{equation}
    \begin{split}
    	&\ket{\psi} = \ket{\psi_E}+\ket{\psi_O}, \\
    	& \ket{\phi} = \ket{\phi_E} + \ket{\phi_O},
    \end{split}\label{eq:fermi-walg}
\end{equation}
and recalling the decomposition in Ref.~\cite{Walgate2000}, we decompose
$\ket{\psi_E}=\sum_{i = 1}^n \ket{e_i}_\rA \ket{\eta^e_i}_\rB $, $\ket{\psi_O}=\sum_{i = 1}^n \ket{o_i}_\rA \ket{\eta^o_i}_\rB $, $\ket{\phi_E}=\sum_{i = 1}^n \ket{e_i}_\rA \ket{\nu^e_i}_\rB $, $\ket{\phi_O}=\sum_{i = 1}^n \ket{o_i}_\rA \ket{\nu^o_i}_\rB $, where $\{\ket{e_i}\}$, $\{\ket{o_i}\}$ are Alice orthonormal bases of even and odd vectors, respectively, while $\{\ket{\eta^x_i}_\rB\}$, and $\{\ket{\nu^x_i}_\rB\}$, for $x=e,o$ are Bob vectors resulting from the decomposition.
In general, the latter are not normalized and $\braket{\eta_i^x|\nu_i^x}_\rB\neq0$.
We may indicate with the capitalized letters $E$ or $O$ those entities pertaining to the $E$ and $O$ spaces of $\FockSpace_e (\rA\rB)$, i.e., those subspaces where the parities of Alice's and Bob's subsystems are both even or odd, respectively.
E.g., the $E$ part of vector $\ket{\psi}$ is defined as $\ket{\psi_E} = \sum_{i = 1}^n \ket{e_i}_\rA \ket{\eta^e_i}_\rB$, whereas the $O$ part is $\ket{\psi_O} = \sum_{i = 1}^n \ket{o_i}_\rA \ket{\nu^o_i}_\rB$.
The orthogonality condition $\braket{\psi|\phi}=0$ generally reads
\begin{equation}\label{eq:orthogen}
    \braket{\psi_E|\phi_E}+\braket{\psi_O|\phi_O}=0.
\end{equation}

Let us consider as the first case the scenario where the two preparations have only one component. Then  they have components either in complementary subspaces, e.g., $\ket{\psi} = \ket{\psi_E}$ and $\ket{\phi} = \ket{\phi_O}$, and it is trivially possible to discriminate via \ac{LOCC} by measuring the local parities, or in the same subspace, e.g., $\ket{\psi} = \ket{\psi_E}$ and $\ket{\phi} = \ket{\phi_E}$.
In the latter case the protocol reduces to the quantum one.
Indeed, Alice selects the right basis $\{\ket{e_i}\}$ and lets Bob perfectly discriminate between $\ket{\eta^e_i}$ and $\ket{\nu^e_i}$, which are now orthogonal thanks to the result of Ref.~\cite{Walgate2000}.
Moreover, as proved in Ref.~\cite{DAriano2014}, product \acp{POVM} in the Jordan-Wigner representation correspond to \ac{LOCC} Fermionic \acp{POVM}.

As the second case, we consider the situation where only one component out of the four $\ket{\psi_E}$, $\ket{\psi_O}$, $\ket{\phi_E}$, $\ket{\phi_O}$ is null.
Perfect discrimination is implementable through \ac{LOCC} in this case as well.
Let us take for instance the vectors $\ket{\psi}=\ket{\psi_E}+\ket{\psi_O}$, and $\ket{\phi} = \ket{\phi_O}$; Alice and Bob firstly measure the parity of their subsystem and if the outcome is even, they know for sure that the system has been prepared in the state $\ket{\psi}$.
Otherwise, the state after the measurement is either $\ket{\psi_O}/\Norm{\psi_O}$ or $\ket{\phi_O}$, and the above strategy for the first case applies.

In the most general case all four components are non-null.
If the two $E$ and $O$ parts are orthogonal---that is when $\braket{\psi_E|\phi_E} = \braket{\psi_O | \phi_O} = 0$---Alice and Bob can measure locally the parity of their systems, thus obtaining the post-measurement states $\ket{\psi'}=\ket{\psi_E}/\Norm{\psi_E}$ and $\ket{\phi'}=\ket{\phi_E}/\Norm{\phi_E}$ if the outcomes are both even, $\ket{\psi'}=\ket{\psi_O}/\Norm{\psi_O}$ and $\ket{\phi'}=\ket{\phi_O}/\Norm{\phi_O}$ if the outcomes are both odd.
Consequently they reduced to the first case.

There is one situation left fulfilling condition~\eqref{eq:orthogen}, i.e., when $\braket{\psi_E|\phi_E}\neq0$ and  $\braket{\psi_O|\phi_O}\neq0$.
This case exhibits the main difference with respect to quantum theory.
Consider for instance the states $1/\sqrt2(\ket{00}_\rA \ket{00}_\rB \pm \ket{01}_\rA \ket{01}_\rB)$.
In this case, the decompositions in Eq.~\eqref{eq:walg} involves bases $\{\ket{\eta_i}_\rB\}$ and $\{\ket{\nu_i}_\rB\}$
where superpositions forbidden by the Fermionic superselection rule appear.
Indeed, one has $i=\pm$ and
\begin{gather*}
    \ket{\pm}_\rA \coloneqq \frac1{\sqrt2}(\ket{00}\pm\ket{01}), \\
    \ket{\eta_\pm}_\rB = \ket{\nu_\mp}_\rB\coloneqq\frac1{\sqrt2}(\ket{00}\pm\ket{01}).
\end{gather*}

The last case can thus not be treated by straightforwardly applying the quantum strategy of Ref.~\cite{Walgate2000}.
The following theorem summarizes what we discussed so far, and shows that it is not possible to perfectly discriminate two states with $\braket{\psi_E|\phi_E}\neq0$ and  $\braket{\psi_O|\phi_O}\neq0$ through \acp{POVM} in
\ac{SEP}, thus neither by means of \ac{LOCC}.

\begin{theorem}\label{th:orthogonal_discriminability}
	Let $\ket{\psi}$ and $\ket{\phi}$ be two pure, normalized and orthogonal states. Then the following statements are equivalent:
	\begin{enumerate*}[label=(\roman*)]
	    \item\label{itm:orthogonal_Fermion} The even and odd parts are separately orthogonal, i.e.,
	\end{enumerate*}	
	\begin{equation}\label{eq:even_discriminable_state}
	    \braket{\psi_E | \phi_E} = \braket{\psi_O | \phi_O} = 0 .
	\end{equation}
	\begin{enumerate*}[label=(\roman*),resume]
	    \item\label{itm:orthogonal_LOCC} The two states are perfectly discriminable through \ac{LOCC}.
		\item\label{itm:orthogonal_SEP} The two states are perfectly discriminable through \ac{SEP}.
	\end{enumerate*}
\end{theorem}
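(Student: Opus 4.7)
The plan is to prove the chain (i) $\Rightarrow$ (ii) $\Rightarrow$ (iii) $\Rightarrow$ (i). The first and second implications are essentially a recap of the discussion preceding the theorem, while the last one is the new content and hinges on the parity superselection rule for local Fermionic effects.

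For (i) $\Rightarrow$ (ii), the protocol is the one already sketched in the preamble: under the hypothesis $\braket{\psi_E|\phi_E}=\braket{\psi_O|\phi_O}=0$, Alice and Bob first perform local parity measurements, whose outcomes are perfectly correlated because $\ket{\psi},\ket{\phi}\in\FockSpace_e(\rA\rB)$. Conditionally on the common outcome, the post-measurement states lie entirely in a single sector ($E$ or $O$) and are still orthogonal, so one applies the Walgate decomposition of Eq.~\eqref{eq:walg} inside that sector and invokes the correspondence between product Jordan--Wigner \acp{POVM} and local Fermionic \acp{POVM} of Ref.~\cite{DAriano2014} to conclude that the overall protocol is a genuine \ac{LOCC} one. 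The step (ii) $\Rightarrow$ (iii) is immediate, since \ac{LOCC} $\subset$ \ac{SEP}.

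The crux is (iii) $\Rightarrow$ (i). Suppose that a \ac{SEP} \ac{POVM} $\{S_\psi,S_\phi\}$ discriminates the two states perfectly, so that $\braket{\psi|S_\psi|\psi}=\braket{\phi|S_\phi|\phi}=1$. Together with $0\le S_\psi,S_\phi\le I$ and $S_\psi+S_\phi=I$, this forces $S_\phi\ket{\psi}=0$ and $S_\phi\ket{\phi}=\ket{\phi}$. Writing $S_\phi=\sum_k A_k\otimes B_k$ with $A_k,B_k$ admissible local Fermionic effects, the parity superselection rule requires each $A_k$ and each $B_k$ to commute with the corresponding local parity operator. Hence every summand preserves the four subspaces of definite local parity, and the restriction of $S_\phi$ to $\FockSpace_e(\rA\rB)$ is block-diagonal with respect to the orthogonal decomposition $\FockSpace_e(\rA\rB)=E\oplus O$. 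Splitting the two relations above along this decomposition yields $(S_\phi)_E\ket{\psi_E}=0$, $(S_\phi)_O\ket{\psi_O}=0$, $(S_\phi)_E\ket{\phi_E}=\ket{\phi_E}$ and $(S_\phi)_O\ket{\phi_O}=\ket{\phi_O}$. By Hermiticity of $(S_\phi)_E$ and $(S_\phi)_O$ one then gets $\braket{\psi_E|\phi_E}=\braket{\psi_E|(S_\phi)_E|\phi_E}=0$ and, analogously, $\braket{\psi_O|\phi_O}=0$, which is exactly Eq.~\eqref{eq:even_discriminable_state}.

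The single non-trivial ingredient---essentially the whole content of the theorem---is the block-diagonality of separable Fermionic effects in the $E\oplus O$ decomposition, which is forced by the superselection constraint on local operators. Once this structural fact is isolated, the remaining algebra in (iii) $\Rightarrow$ (i) is routine, and it also makes transparent the contrast with the qubit case, where no such constraint exists and the Walgate protocol succeeds for every orthogonal pair.
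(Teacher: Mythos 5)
Your proof is correct and follows essentially the same route as the paper's: both directions rest on the same ingredients, namely local parity measurements plus the Walgate decomposition for (i)~$\Rightarrow$~(ii), and the structural fact that a separable Fermionic effect supported on $\FockSpace_e(\rA\rB)$ splits as $S_E+S_O$ along the $E\oplus O$ decomposition for the converse. The only cosmetic difference is that the paper phrases the final step through the trace conditions of Eq.~\eqref{eq:perfect_condition} on the normalized $E$ and $O$ parts, whereas you work directly with the operator identities $S_\phi\ket{\psi}=0$ and $S_\phi\ket{\phi}=\ket{\phi}$; the content is the same.
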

\begin{proof}
	It is trivial to see that \textit{\ref{itm:orthogonal_LOCC}}~$\Rightarrow$~\textit{\ref{itm:orthogonal_SEP}}, whereas we have already shown above that \textit{\ref{itm:orthogonal_Fermion}}~$\Rightarrow$~\textit{\ref{itm:orthogonal_LOCC}} thanks to Ref.~\cite{Walgate2000}.
	We now focus on the implication \textit{\ref{itm:orthogonal_SEP}}~$\Rightarrow$~\textit{\ref{itm:orthogonal_Fermion}} and wonder under what conditions one has
	\begin{equation}\label{eq:operational_norm}
		\max_{S \in \SEPSet} \Tr[(\ket{\psi}\bra{\psi} - \ket{\phi}\bra{\phi}) S]=1;
	\end{equation}
	namely, the condition for perfect discriminability via \ac{SEP}.
	The expression in Eq.~\eqref{eq:operational_norm} clearly involves only the component of $S$ supported on the even subspace $\FockSpace_e(\rA\rB)$.
	Now, a necessary condition for 
	a Fermionic effect $S$ supported on $\FockSpace_e(\rA\rB)$ to be \ac{SEP} is that $S = S_E + S_O$, where $S_E$ and $S_O$ have their support on the $E$ space and $O$ space, respectively (see the Supplemental Material).
	Consequently, the condition in Eq.~\eqref{eq:operational_norm} is equivalent to
	\begin{align}
		\begin{aligned}
		\Tr\Big[\Big(\ket{\tilde\psi_E} \bra{\tilde\psi_E} - \ket{\tilde\phi_E} \bra{\tilde\phi_E}\Big) S_E\Big] &= 1, \\
		\Tr\Big[\Big(\ket{\tilde\psi_O} \bra{\tilde\psi_O} - \ket{\tilde\phi_O} \bra{\tilde\phi_O}\Big) S_O\Big] &= 1 , 
		\end{aligned}
		\label{eq:perfect_condition}
	\end{align}
	for $S=S_E+S_O$ representing an effect in \ac{SEP}, $\ket{\tilde\psi_E}$, $\ket{\tilde\phi_E}$, $\ket{\tilde\psi_O}$, $\ket{\tilde\phi_O}$ being normalized vectors such that $\ket{\tilde\psi_E} = \ket{\psi_E} / \Norm{\psi_E}$ etc.
	Thus, it is possible to perfectly discriminate the two states through separable effects only if the $E$  and $O$ parts are perfectly discriminable separately, as required in Eq.~\eqref{eq:even_discriminable_state}.
\end{proof}

\section{Ancilla assisted discrimination}
We now show that one can overcome the limits of Theorem~\ref{th:orthogonal_discriminability} by providing the two parties with an ancillary system prepared in a suitable pure entangled state $\ket\omega$.
Let us take
\begin{equation}\label{eq:ancilla}
	\ket{\omega}_{\rm AB} \coloneqq a \ket{00} + b \ket{11} \quad\text{for}\quad a, b \neq 0 ,
\end{equation}
and consider the task of discriminating the new vectors $\ket{\psi'} \coloneqq \ket{\psi} \otimes \ket{\omega}$ and $\ket{\phi'} \coloneqq \ket{\phi} \otimes \ket{\omega}$.
In particular, we will see that only a \emph{maximally entangled} ancillary state---i.e., with $\abs a^2=\abs b^2=1/2$---enables perfect discrimination between every two pure Fermionic states, regardless of condition~\eqref{eq:even_discriminable_state}.

\begin{theorem}\label{th:ancilla_assisted}
	It is always possible to perfectly discriminate between every two pure, normalized and orthogonal preparations $\ket{\psi}$ and $\ket{\phi}$ with \ac{LOCC} and an ancillary system in a pure maximally entangled state 
	\begin{equation}
		\ket{\omega}_{\rm AB} = \frac{1}{\sqrt{2}} \left(\ket{00} + e^{i\varphi} \ket{11}\right) , \quad \varphi \in [0, 2\pi) .
		\label{eq:ancimax}
	\end{equation}	
    Moreover, the same does not hold if the ancillary state is not maximally entangled.
\end{theorem}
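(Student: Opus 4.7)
My plan is to reduce the claim to Theorem~\ref{th:orthogonal_discriminability}, applied to the enlarged bipartition in which Alice holds $\rA A$ and Bob holds $\rB B$. The key observation is that the ancilla vectors $\ket{00}_{AB}$ and $\ket{11}_{AB}$ lie in the $E$ and $O$ sectors of $\FockSpace_e(AB)$ respectively, because subsystem parities compose multiplicatively across a tensor product. Once this is noted, I only need to verify whether the enlarged states $\ket{\psi'}$ and $\ket{\phi'}$ satisfy condition~\eqref{eq:even_discriminable_state} on the extended bipartition, and \ac{LOCC} discriminability follows for free.

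Starting from the decompositions of Eq.~\eqref{eq:fermi-walg} and writing $\ket{\omega}=a\ket{00}+b\ket{11}$, a direct parity count on the four terms of the form $\ket{\psi_X}\otimes\ket{xx}$ yields
\begin{align*}
\ket{\psi'_E} &= a\ket{\psi_E}\!\otimes\!\ket{00}+b\ket{\psi_O}\!\otimes\!\ket{11}, \\
\ket{\psi'_O} &= a\ket{\psi_O}\!\otimes\!\ket{00}+b\ket{\psi_E}\!\otimes\!\ket{11},
\end{align*}
and the analogous formulas for $\ket{\phi'}$. Orthonormality of $\ket{00}$ and $\ket{11}$ kills the cross terms in the overlaps, leaving
\begin{align*}
\braket{\psi'_E|\phi'_E} &= |a|^2\braket{\psi_E|\phi_E}+|b|^2\braket{\psi_O|\phi_O}, \\
\braket{\psi'_O|\phi'_O} &= |b|^2\braket{\psi_E|\phi_E}+|a|^2\braket{\psi_O|\phi_O}.
\end{align*}
For the maximally entangled ancilla of Eq.~\eqref{eq:ancimax}, $|a|^2=|b|^2=1/2$, so both expressions collapse to $\tfrac{1}{2}\bigl(\braket{\psi_E|\phi_E}+\braket{\psi_O|\phi_O}\bigr)$, which vanishes by Eq.~\eqref{eq:orthogen}. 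Theorem~\ref{th:orthogonal_discriminability} then supplies a perfect \ac{LOCC} protocol on $(\rA A)|(\rB B)$, which is precisely an \ac{LOCC} protocol on the original system assisted by the shared ancilla. Note that the phase $e^{i\varphi}$ is irrelevant because only $|a|^2$ and $|b|^2$ enter the overlaps.

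For the converse direction, when $|a|^2\neq|b|^2$ I would exhibit any pair of states with $\braket{\psi_E|\phi_E}\neq 0$ --- for instance the example displayed just before the theorem --- for which Eq.~\eqref{eq:orthogen} forces $\braket{\psi_O|\phi_O}=-\braket{\psi_E|\phi_E}\neq 0$. Substituting into the overlaps above gives $\braket{\psi'_E|\phi'_E}=(|a|^2-|b|^2)\braket{\psi_E|\phi_E}\neq 0$, and an analogous non-vanishing result for the $O$-overlap, so by Theorem~\ref{th:orthogonal_discriminability} the enlarged states are not even \ac{SEP}-discriminable, hence not \ac{LOCC}-discriminable. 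The only nontrivial ingredient in the whole argument is the parity bookkeeping that assigns each term of $\ket{\psi}\otimes\ket{\omega}$ to $\ket{\psi'_E}$ or $\ket{\psi'_O}$ on the enlarged bipartition; once this is settled through the Jordan-Wigner picture, everything else is a short inner-product calculation followed by invoking Theorem~\ref{th:orthogonal_discriminability}.
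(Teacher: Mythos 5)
Your proposal is correct and follows essentially the same route as the paper's proof: decompose $\ket{\psi'}=\ket{\psi}\otimes\ket{\omega}$ into its $E$ and $O$ parts on the enlarged bipartition, observe that $\braket{\psi'_E|\phi'_E}=|a|^2\braket{\psi_E|\phi_E}+|b|^2\braket{\psi_O|\phi_O}$ collapses to $\tfrac12\braket{\psi|\phi}=0$ when $|a|^2=|b|^2$, and for the converse use $\braket{\psi_O|\phi_O}=-\braket{\psi_E|\phi_E}$ to get $\braket{\psi'_E|\phi'_E}=(|a|^2-|b|^2)\braket{\psi_E|\phi_E}\neq0$, then invoke Theorem~\ref{th:orthogonal_discriminability} in both directions. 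The parity bookkeeping and the final inner-product computations match the paper's Supplemental Material essentially line for line.
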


\begin{proof}
	We show here a sketch of the proof, the full rigorous derivation being given in the Supplemental Material.
	Let us consider the states
	\begin{align*}
		&\ket{\psi'} = \ket{\psi} \otimes \ket{\omega} =\ket{\psi'_O}+\ket{\psi'_E},\\ 
		&\ket{\phi'} = \ket{\psi} \otimes \ket{\omega} =\ket{\phi'_O}+\ket{\phi'_E},
	\end{align*}
	with $\ket{\psi'_E}= 		
		a \ket{\psi_E00} + b \ket{\psi_O 11}$, $\ket{\psi'_O}
		=b \ket{\psi_E 11} + a \ket{\psi_O 00}$, $\ket{\phi'_E}=a \ket{\phi_E 00} + b  \ket{\phi_O 11}$, and $\ket{\phi'_O}=b  \ket{\phi_E 11} + a \ket{\phi_O 00}$, 
	and evaluate for $\abs{a}^2 = \abs{b}^2 = \frac{1}{2}$ the scalar products
	\begin{equation*}
		 \braket{\psi_E' | \phi_E'} =  \braket{\psi_O' | \phi_O'} = \frac{1}{2} \braket{\psi | \phi} = 0 .
	\end{equation*}
	The vectors $\ket{\psi'}$ and $\ket{\phi'}$ do satisfy Eq.~\eqref{eq:even_discriminable_state}, even if $\ket{\psi}$ and $\ket{\phi}$ may not.
	Thus, we are now able to apply the protocol of Theorem~\ref{th:orthogonal_discriminability} to the new states as shown above. Condition~\eqref{eq:ancimax} is also necessary for perfect discrimination, as shown in the Supplemental Material.
\end{proof}

\section{Optimal discrimination}
If the orthogonality condition $\braket{\psi|\phi}=0$ is relaxed,  the two states are clearly not perfectly discriminable.
Hence, one looks for the protocol which minimizes the error probability---i.e., the probability of wrong detection.
For this purpose, it is necessary to introduce our prior probabilities for the two states, given by the distribution $\{p,q\}$. In this case, the error probability reads
\[ \Perr \coloneqq \Tr[p\ket{\psi} \bra{\psi} \Pi_\phi + q\ket{\phi} \bra{\phi} \Pi_\psi] , \]
where $\{\Pi_\psi,\Pi_\phi\}$ is the binary \ac{POVM} describing the discrimination protocol.
We remind that by definition the \ac{POVM} satisfies $\Pi_\psi, \Pi_\phi \ge 0$ and $\Pi_\psi + \Pi_\phi = I$.
In the quantum case, the optimal discrimination strategy corresponds to the \ac{POVM} $\{\ket{+}\bra+, \ket{-}\bra-\}$ diagonalizing the operator
\begin{equation}\label{eq:operator_delta}
	\Delta \coloneqq p \ket{\psi} \bra{\psi} - q \ket{\phi} \bra{\phi} = \lambda_+ \ket{+} \bra{+} + \lambda_- \ket{-} \bra{-} ,
\end{equation}
where $\lambda_+ > 0$, $\lambda_- < 0$ are the eigenvalues of $\Delta$, and $\braket{+ | -} = 0$ (see Refs.~\cite{Helstrom1969,Helstrom1976}).
The corresponding error probability is~\cite{Helstrom1976}
\begin{align}
\Perr=\frac12\left(1-{\Norm\Delta_1}\right).
\label{eq:hell}
\end{align}
In Ref.~\cite{Virmani2001}, the authors observe that optimal discrimination through \ac{LOCC} of $\ket\psi$ and $\ket\phi$ with prior probabilities $p$ and $q$, respectively, is equivalent to perfect \ac{LOCC} discrimination between $\ket{+}$ and $\ket{-}$ (see also Ref.~\cite{Helstrom1976}), thus reducing the optimal case to an instance of perfect discrimination.
While the latter is always possible in quantum theory, we know from Theorem~\ref{th:orthogonal_discriminability} that in Fermionic theory this is true only if the eigenvectors satisfy
\begin{equation}\label{eq:optimal_LOCC_condition}
	\braket{+_E | -_E} = \braket{+_O | -_O} = 0.
\end{equation}
Otherwise, by Theorem~\ref{th:ancilla_assisted} perfect \ac{LOCC} discrimination requires a maximally entangled ancilla.
As for the perfect discrimination case, also the conditions for optimal \ac{LOCC} discrimination in Fermionic theory differ from the quantum ones only when the $E$ and $O$ components of $\ket+$ and $\ket-$ are all non-zero, and $\braket{+_E|-_E},\braket{+_O|-_O}\neq0$.
For the latter case, we now prove a necessary and sufficient condition for achievability of optimal discrimination with \ac{LOCC} that does not require diagonalization of $\Delta$.

\begin{theorem}\label{th:Optimal_LOCC}
	Let $\rho = p \ket{\psi} \bra{\psi}$ and $\sigma = q \ket{\phi} \bra{\phi}$ be two pure and sub-normalized states for $p, q > 0$ and $p + q = 1$.
	They are optimally discriminable through \ac{LOCC} if and only if they satisfy
	\begin{equation}\label{eq:optimal_condition}
		[\Delta, P_E] = 0 ,
	\end{equation}
	where $\Delta$ is defined in Eq.~\eqref{eq:operator_delta} and $P_E$ is the projector onto the $E$ subspace.
\end{theorem}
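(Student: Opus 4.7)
The plan is to turn Theorem~\ref{th:Optimal_LOCC} into a corollary of Theorem~\ref{th:orthogonal_discriminability} via the Helstrom--Virmani reduction already recalled in the paragraph preceding the statement: optimal \ac{LOCC} discrimination of $\rho = p\ket{\psi}\bra{\psi}$ and $\sigma = q\ket{\phi}\bra{\phi}$ is equivalent to perfect \ac{LOCC} discrimination of the two orthogonal eigenvectors $\ket{+}, \ket{-}$ of $\Delta$, associated respectively with the strictly positive eigenvalue $\lambda_+$ and the strictly negative eigenvalue $\lambda_-$. Theorem~\ref{th:orthogonal_discriminability} applied to this orthogonal pair then translates LOCC discriminability into the parity-resolved orthogonality of Eq.~\eqref{eq:optimal_LOCC_condition}, $\braket{+_E|-_E} = \braket{+_O|-_O} = 0$. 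The substantive step, and what is actually new in the theorem, is to show the equivalence of this condition on the eigenvectors with the spectral-free commutator condition $[\Delta, P_E] = 0$.

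To carry out that translation I would first use the identity
\[ [\Delta, P_E] = P_O \Delta P_E - P_E \Delta P_O , \]
whose two summands are mutually adjoint maps between the orthogonal subspaces $E$ and $O$, so $[\Delta, P_E] = 0$ is equivalent to the single equality $P_E \Delta P_O = 0$. Inserting the spectral decomposition of $\Delta$ yields
\[ P_E \Delta P_O = \lambda_+ \ket{+_E}\bra{+_O} + \lambda_- \ket{-_E}\bra{-_O} , \]
reducing the question to an analysis of when this sum of at most two rank-one maps from $O$ into $E$ vanishes.

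The forward direction $[\Delta, P_E] = 0 \Rightarrow$ Eq.~\eqref{eq:optimal_LOCC_condition} should follow by a short case analysis: block diagonality of $\Delta$ with respect to $E \oplus O$, combined with the fact that $\Delta$ is rank two with eigenvalues of opposite sign, forces $\ket{+}$ and $\ket{-}$ to be distributed between the two blocks in one of only a few configurations -- both eigenvectors in $E$, both in $O$, or one in each. In every such configuration both inner products in Eq.~\eqref{eq:optimal_LOCC_condition} vanish, either trivially because some of the four parity parts are null, or because $\ket{+}$ and $\ket{-}$ are already orthogonal inside a common block.

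The converse, Eq.~\eqref{eq:optimal_LOCC_condition} $\Rightarrow [\Delta, P_E] = 0$, is the step I expect to be the delicate one. Here the plan is to combine the two parity-resolved orthogonalities with the overall $\braket{+|-} = \braket{+_E|-_E} + \braket{+_O|-_O} = 0$ and the opposite signs of $\lambda_\pm$ to rule out any generic configuration in which the two rank-one summands of $P_E \Delta P_O$ could conspire to cancel without vanishing individually; the argument should force the situation back into one of the degenerate configurations handled in the forward direction, from which $P_E \Delta P_O = 0$ follows. Controlling this cancellation -- essentially showing that orthogonality of the parity parts cannot be traded for proportionality -- is where I expect the main technical work to lie.
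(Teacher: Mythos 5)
Your overall route coincides with the paper's: both pass through the Helstrom--Virmani reduction to perfect discrimination of $\ket+,\ket-$, then through Theorem~\ref{th:orthogonal_discriminability} to the condition $\braket{+_E|-_E}=\braket{+_O|-_O}=0$ of Eq.~\eqref{eq:optimal_LOCC_condition}, and finally try to identify that condition with $[\Delta,P_E]=0$. Your forward direction is sound: $[\Delta,P_E]=0$ is indeed equivalent to $P_E\Delta P_O=0$, the decomposition $P_E\Delta P_O=\lambda_+\ket{+_E}\bra{+_O}+\lambda_-\ket{-_E}\bra{-_O}$ is correct, and non-degeneracy of the nonzero eigenvalues forces each of $\ket\pm$ entirely into $E$ or entirely into $O$, whence Eq.~\eqref{eq:optimal_LOCC_condition} follows.

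The converse is where your proposal stops being a proof, and the gap is not one of missing detail: the implication you plan to establish does not follow from the stated orthogonality conditions. Equation~\eqref{eq:optimal_LOCC_condition} only says $\ket{+_E}\perp\ket{-_E}$ and $\ket{+_O}\perp\ket{-_O}$; it does not force the two rank-one summands of $P_E\Delta P_O$ to cancel, nor to vanish individually. Take $\ket{+}=(\ket{e_1}+\ket{o_1})/\sqrt2$ and $\ket{-}=(\ket{e_2}+\ket{o_2})/\sqrt2$ with $e_1,e_2\in E$ and $o_1,o_2\in O$ all mutually orthogonal (possible already for two modes per party, and realizable as the eigenvectors of $\Delta$ for suitable $\ket\psi,\ket\phi,p$). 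Then both inner products in Eq.~\eqref{eq:optimal_LOCC_condition} vanish, yet $P_E\Delta P_O=\tfrac12\left(\lambda_+\ket{e_1}\bra{o_1}+\lambda_-\ket{e_2}\bra{o_2}\right)\neq0$: the two summands have orthogonal ranges and orthogonal co-ranges, so there is no ``conspiracy to cancel'' to rule out --- they simply coexist. Your plan of forcing the situation back into the degenerate configurations of the forward direction therefore cannot succeed. Be aware that the paper's own proof is compressed at exactly this point: it identifies diagonality of the restriction of $P_E$ to $\Span\{\ket\psi,\ket\phi\}$ in the basis $\{\ket+,\ket-\}$ with simultaneous diagonalizability of $P_E$ and $\Delta$, an identification that requires $P_E$ to preserve $\Span\{\ket+,\ket-\}$ and fails in the configuration above. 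So your instinct that the converse is the delicate step is right, but neither your sketch nor a routine elaboration of it closes it; by these means only the implication from \eqref{eq:optimal_condition} to \eqref{eq:optimal_LOCC_condition} is actually available.
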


\begin{remark}
    Let us consider the projectors $P_e$ and $P_O$ on the even subspace $\FockSpace_e(\rA\rB)$ and $O$ subspace of system $\rA\rB$, respectively, to observe that $P_e = P_E + P_O$ and $[\Delta, P_e] = 0$.
    Hence, Eq.~\eqref{eq:optimal_condition} is fulfilled if and only if $[\Delta, P_O] = 0$ so the two expressions are interchangeable.
\end{remark}

\begin{proof}
	Since optimal discrimination between $\ket\psi$ and $\ket\phi$ is equivalent to perfect discrimination between $\ket+$ and $\ket-$, by Eq.~\eqref{eq:optimal_LOCC_condition} optimal discriminability of the states $\ket\psi$ and $\ket\phi$ by \ac{LOCC} is equivalent to the condition
	\begin{equation}\label{eq:Fermionic_optimal_condition_v2}
		\braket{+ | P_E | -} = \braket{+ | P_O | -} = 0 .
	\end{equation}
	Now, taking the difference of the first two members of Eq.~\eqref{eq:Fermionic_optimal_condition_v2}, we can then express the \ac{LOCC}-discriminability condition through the single expression
		\begin{equation}\label{eq:Fermionic_optimal_condition_v3}
		\braket{+ |( P_E - P_O )| -} = 0 .
	\end{equation}
	Indeed, since $P_O = P_e - P_E$, Eq.~\eqref{eq:Fermionic_optimal_condition_v3} is equivalent to the requirement that the restriction of the projector $P_E$ onto the space $\Span\{\ket{\psi}, \ket{\phi}\}$ is diagonal in the basis $\{\ket{+}, \ket{-}\}$.
	The operators $\Delta$ and $P_E$ are simultaneously diagonalizable if and only if $[\Delta, P_E] = 0$.
	Equation \eqref{eq:optimal_condition} is then equivalent to attainability of optimal discrimination between the two states $\rho$ and $\sigma$ via \ac{LOCC}.
\end{proof}

We may wonder what happens when condition~\eqref{eq:optimal_condition} is not satisfied.
As we show in the next theorem, the best discrimination strategy through \ac{SEP} corresponds to measuring in the basis of eigenvectors of $\Delta_E$ and $\Delta_O$, defined as the restriction of the operator $\Delta$ onto the $E$ and $O$ subspaces, respectively.
Such a strategy is \ac{LOCC}.

\begin{theorem}\label{th:optimal_LOCC_error}
	Let $\rho = p\ket\psi \bra\psi$ and $\sigma = q \ket\phi \bra\phi$ be two pure subnormalized states for $p, q > 0$ and $p + q = 1$.
	The optimal \ac{SEP} discrimination protocol is locally implementable through \ac{LOCC} and its error probability reads
	\begin{equation}\label{eq:optimal_LOCC_error}
		\Perr^\SEPSet = \Perr^\LOCCSet = \frac{1}{2} (1 - \Norm{\Delta_E + \Delta_O}_1) ,
	\end{equation}
	where $\Delta_E = P_E \Delta P_E$ and $\Delta_O = P_O \Delta P_O$.
\end{theorem}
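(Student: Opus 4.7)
The plan is to bracket $\Perr^{\SEPSet}$ between a converse bound derived from the block structure of separable effects and an achievability bound realized by an explicit \ac{LOCC} protocol, and show that the two meet. First I would rewrite the error as $\Perr = p - \Tr[\Delta \Pi_\psi]$, so that minimizing it amounts to maximizing $\Tr[\Delta \Pi_\psi]$ over \ac{SEP} POVMs. Since $\ket\psi,\ket\phi\in\FockSpace_e(\rA\rB)$, only the restriction of $\Pi_\psi$ to that sector enters the trace. Invoking the structural lemma already used in the proof of Theorem~\ref{th:orthogonal_discriminability}---any \ac{SEP} effect on $\FockSpace_e(\rA\rB)$ decomposes as $S=S_E+S_O$ with $0\le S_E\le P_E$ and $0\le S_O\le P_O$---I would write
\begin{equation*}
    \Tr[\Delta\Pi_\psi]=\Tr[\Delta_E S_E]+\Tr[\Delta_O S_O].
\end{equation*}
The requirement that $\Pi_\phi=I-\Pi_\psi$ be \ac{SEP} as well imposes nothing beyond $S_E\le P_E$, $S_O\le P_O$, so the optimization decouples across the two sectors. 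By standard Helstrom arguments the supremum equals $\Tr[(\Delta_E)_+]+\Tr[(\Delta_O)_+]=\tfrac12\bigl((p-q)+\Norm{\Delta_E+\Delta_O}_1\bigr)$, where I use $\Tr[\Delta_E]+\Tr[\Delta_O]=p-q$ and additivity of the trace norm on operators with orthogonal supports. This yields the converse bound $\Perr^{\SEPSet}\ge\tfrac12(1-\Norm{\Delta_E+\Delta_O}_1)$.

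For matching achievability I would exhibit the following \ac{LOCC} protocol. Alice and Bob each measure their local parity; since $\ket\psi,\ket\phi$ are globally even, only the matched outcomes occur, projecting onto either the $E$ or the $O$ sector. In each branch the optimal discrimination is the projective measurement in the basis of eigenvectors of the restricted operator $\Delta_E$ (resp.~$\Delta_O$), which are two orthogonal vectors in $\FockSpace_e(\rA)\otimes\FockSpace_e(\rB)$ (resp.~$\FockSpace_o(\rA)\otimes\FockSpace_o(\rB)$). Within such a branch each party's parity is pinned, so there is no residual Fermionic superselection and Walgate's theorem~\cite{Walgate2000} applies verbatim: the two orthogonal eigenvectors are perfectly distinguished by a protocol in which Alice measures in a suitable local basis and Bob reacts with a local projective measurement. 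Both steps are parity-preserving local Fermionic operations, hence a genuine Fermionic \ac{LOCC} scheme~\cite{DAriano2014}. Accumulating the error contributions over the two branches reproduces exactly $\tfrac12(1-\Norm{\Delta_E+\Delta_O}_1)$, saturating the converse and giving $\Perr^{\SEPSet}=\Perr^{\LOCCSet}$.

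The principal obstacle is this final step: certifying rigorously that the Walgate-type projective measurement onto the eigenvectors of $\Delta_E$ (and $\Delta_O$) lifts to a Fermionic \ac{LOCC} operation rather than merely to a \ac{SEP} one. The delicate point is that those eigenvectors can be superpositions that would be outlawed by the parity superselection rule across the full Fock space; however, within a sector where both parties' local parity is fixed, the Walgate decomposition produces only superpositions of like-parity local vectors, which are admissible Fermionic states. Tying this observation to the correspondence between product \acp{POVM} in the Jordan--Wigner picture and Fermionic \ac{LOCC} \acp{POVM}~\cite{DAriano2014} is where the technical weight of the argument sits.
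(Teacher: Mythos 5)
Your proposal is correct and takes essentially the same route as the paper's (much terser) proof: the converse uses the same necessary block decomposition $\Pi_\psi=\Pi_\psi^E+\Pi_\psi^O$ of separable effects from Theorem~\ref{th:orthogonal_discriminability}, and achievability is the sector-wise Helstrom measurement rendered \ac{LOCC} via parity measurements followed by the Walgate protocol, exactly as the paper indicates in the paragraph preceding the theorem. Your write-up merely fills in the algebra $\Tr[(\Delta_E)_+]+\Tr[(\Delta_O)_+]=\tfrac12\bigl((p-q)+\Norm{\Delta_E+\Delta_O}_1\bigr)$ and the superselection check that the paper leaves implicit.
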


\begin{proof}
    This result can be obtained considering that
    \begin{equation*}
    	\Perr^\SEPSet = p - \max_{\Pi_\psi\in\SEPSet(\rA\rB)} \Tr[\Pi_\psi \Delta] ,
    \end{equation*}
    where $\Pi_\psi$ must be of the form $\Pi_\psi = \Pi_\psi^E + \Pi_\psi^O$ in order to comply with the separability condition, as observed in the proof of Theorem~\ref{th:orthogonal_discriminability}.
    The result then follows.
\end{proof}

The above result allows us to treat the case where we are restricted only to local measurements and Eq.~\eqref{eq:optimal_condition} does not hold for the preparations $\rho$, $\sigma$.
Once we are given the pure states $\ket\psi$ and $\ket\phi$, the condition for optimal \ac{LOCC} discrimination of Eq.~\eqref{eq:optimal_condition} is fulfilled either for the vectors laying in the $E$ or $O$ space, i.e., $[\ket\psi\bra\psi, P_E] = [\ket\phi\bra\phi, P_E] = 0$, or if the probability $p$ satisfies
\begin{equation}\label{eq:unstable}
	[\ket\psi\bra\psi, P_E] = \frac{1 - p}{p} [\ket\phi\bra\phi, P_E] .
\end{equation}
Condition~\eqref{eq:unstable} can be satisfied by a unique value of the prior probability $p$, unless $[\ket\psi\bra\psi,P_E]=[\ket\phi\bra\phi,P_E]=0$.
However, we now show that optimal \ac{LOCC} discrimination can achieve the performances of unconstrained protocols, provided that two ancillary Fermionic systems are used in a maximally entangled state.
As discussed above, indeed, the problem of optimal discrimination between two pure states reduces to that of the orthogonal vectors $\ket+,\ket-$ in Eq.~\eqref{eq:operator_delta}.
Considering Theorem~\ref{th:ancilla_assisted}, we know that orthogonal states can be perfectly discriminated via \ac{LOCC} provided a maximally entangled ancillary system is available. These two observations immediately lead to our last result.
\begin{theorem}
	Let $\rho = p\ket\psi \bra\psi$ and $\sigma = q \ket\phi \bra\phi$ be two pure subnormalized states for $p, q > 0$ and $p + q = 1$.
	It is always possible to optimally discriminate between the two preparations via \ac{LOCC} if we use an ancillary system in a pure maximally entangled state.
\end{theorem}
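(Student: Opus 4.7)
The plan is to chain two reductions already established in the paper. The Helstrom--Virmani observation recalled in the paragraph preceding Theorem~\ref{th:Optimal_LOCC} expresses optimal discrimination of $\ket\psi,\ket\phi$ with priors $p,q$ as perfect discrimination of the orthogonal eigenvectors $\ket+,\ket-$ of the Helstrom operator $\Delta$; Theorem~\ref{th:ancilla_assisted} then guarantees that any orthogonal pair of pure Fermionic states becomes perfectly \ac{LOCC}-discriminable once a pair of Fermionic modes prepared in a maximally entangled state $\ket\omega$ is appended.

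To combine them, I would attach such an ancilla to both preparations and work with the enlarged vectors $\ket{\psi'}\coloneqq\ket\psi\otimes\ket\omega$ and $\ket{\phi'}\coloneqq\ket\phi\otimes\ket\omega$, still with priors $p,q$. A direct computation yields
\[
    \Delta' \coloneqq p\ket{\psi'}\bra{\psi'} - q\ket{\phi'}\bra{\phi'} = \Delta \otimes \ket\omega\bra\omega,
\]
so that $\Norm{\Delta'}_1 = \Norm{\Delta}_1$ and the two nonzero eigenvectors of $\Delta'$ are $\ket+\otimes\ket\omega$ and $\ket-\otimes\ket\omega$, with eigenvalues $\lambda_+$ and $\lambda_-$. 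Applying the Helstrom--Virmani reduction now to the enlarged states, optimal discrimination of $\ket{\psi'},\ket{\phi'}$ reduces to perfect discrimination of this orthogonal pair of enlarged eigenvectors, which is precisely the setting of Theorem~\ref{th:ancilla_assisted} with $\ket+,\ket-$ playing the role of the states to be discriminated and $\ket\omega$ the role of the maximally entangled ancilla.

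The \ac{LOCC} protocol thus produced attains error probability $\frac{1}{2}(1-\Norm{\Delta'}_1)=\frac{1}{2}(1-\Norm{\Delta}_1)$, which coincides with the Helstrom bound for the original problem. The only nontrivial step is verifying that the ancilla attachment does not lower the unconstrained optimum, namely the identity $\Norm{\Delta'}_1=\Norm{\Delta}_1$; this is immediate from $\Delta'=\Delta\otimes\ket\omega\bra\omega$ together with $\Norm{\ket\omega\bra\omega}_1=1$ since $\ket\omega$ is a unit vector. Everything else is a direct invocation of the two theorems already established, so no genuine obstacle is expected beyond this routine bookkeeping.
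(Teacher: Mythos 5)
Your proposal is correct and follows essentially the same route as the paper, which likewise chains the Helstrom--Virmani reduction to perfect discrimination of the eigenvectors $\ket+,\ket-$ of $\Delta$ with Theorem~\ref{th:ancilla_assisted}; the paper states this in two sentences without further detail. Your explicit verification that $\Delta'=\Delta\otimes\ket\omega\bra\omega$, so that the enlarged problem has the same unconstrained optimum $\tfrac12(1-\Norm{\Delta}_1)$ and eigenvectors of the product form required by Theorem~\ref{th:ancilla_assisted}, is a welcome piece of bookkeeping that the paper leaves implicit.
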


Equation~\eqref{eq:unstable} introduces a strict condition on the prior probability of the preparations, which are always subject to noise.
We show hereafter that if we introduce a small perturbation $\epsilon$ on the preparation probabilities of pair of states satisfying Eq.~\eqref{eq:optimal_condition}, the discrimination error probability increases at most linearly in $\epsilon$ with respect to the appropriate optimal \ac{LOCC} protocol.
Thus, we map $p \mapsto p + \epsilon$ and attain
\begin{equation*}\begin{split}
	\Delta^\epsilon \coloneqq& (p + \epsilon) \ket{\psi} \bra{\psi} - (q - \epsilon) \ket{\phi} \bra{\phi} \\
	=& \Delta^{0} + \epsilon (\ket{\psi} \bra{\psi} + \ket{\phi} \bra{\phi}) ,
\end{split}\end{equation*}
where $[\Delta^0, P_E] = 0$.
At this stage, we estimate the error difference between the optimal \textsc{povm} $\mathbb{P}^0 \coloneqq \{\Pi_\psi, \Pi_\phi\}$ for $\epsilon = 0$, which is \ac{LOCC} thanks to Theorem~\ref{th:Optimal_LOCC}, and the \ac{LOCC}-optimal \textsc{povm} for the perturbed case $\Delta^\epsilon$.
The error increases as $\delta\Perr \coloneqq \Perr(\mathbb{P}^0 | \Delta^\epsilon) - \Perr^\LOCCSet(\Delta^\epsilon) \ge 0$ where $\Perr(\mathbb{P}^0 | \Delta^\epsilon) = \Tr[(p + \epsilon) \ket\psi\bra\psi \Pi_\phi + (q - \epsilon) \ket\phi\bra\phi \Pi_\psi]$ and $\Perr^\LOCCSet(\Delta^\epsilon) = \frac{1}{2} (1 - \Norm{\Delta_E^\epsilon + \Delta_O^\epsilon}_1)$ as in Eq.~\eqref{eq:optimal_LOCC_error}.
Accordingly manipulating the expression for $\delta\mathcal{P}_{\text{err}}$ one obtains
\begin{align}\label{eq:bound}
	\delta\Perr \leq k\abs\epsilon+g\epsilon,
\end{align}
where $k,g\geq0$ are suitable constants depending only on $\ket\psi$, $\ket\phi$.
The former inequality is as tight as possible: let us take indeed the states $\ket\psi = 1/\sqrt{2} \ket{00} + 1/\sqrt{2} \ket{11}$ and $\ket\phi = \alpha \ket{00} + \sqrt{1-\alpha^2}\ket{11}$, where $\alpha \coloneqq (1/\sqrt{2} + \xi)$, and $\xi$ belongs to a neighborhood of zero.
In such a case, we have numerically assessed that the error difference $\delta\Perr$ exhibits a corner in $\epsilon = 0$ as $\xi \to 0$ (more details can be found in the Supplemental Material).

We also investigate the performance of the optimal \ac{LOCC} protocol for $\epsilon\neq0$ in the neighborhood of a prior probability $p$ satisfying condition~\eqref{eq:unstable}, by comparing its efficiency to that of the optimal unconstrained (i.e., entanglement-assisted \ac{LOCC}) \textsc{povm}.
Thus we estimate $\delta\Perr' \coloneqq \Perr^\LOCCSet(\Delta^\epsilon) - \Perr(\Delta^\epsilon) \ge 0$ by means of Eqs.~\eqref{eq:hell} and~\eqref{eq:optimal_LOCC_error}, obtaining
\begin{equation}\begin{split}
	\delta\Perr' &\leq \kappa\abs\epsilon ,
\end{split}\label{eq:boundlocc}
\end{equation}
for a suitable $\kappa\geq0$, thanks to the triangle inequality.

We remark that, in the case of a mismatch in the assessment of the prior probability $p$,
also for unconstrained optimal discrimination---coinciding with ancilla-assisted \ac{LOCC}---one has the same bound as in Eq.~\eqref{eq:bound}, with possibly different constants $k$ and $g$.
This feature, however, must not be considered as an artefact of Fermionic theory.
Indeed, the technique used to derive the bound in Eq.~\eqref{eq:bound} is very general and leads to the same behavior in the quantum case as well.

\section{Discussion}
As in the quantum case, discrimination with separable and \ac{LOCC} \acp{POVM} in the Fermionic case achieve the same performances.
Unlike in quantum theory, on the other hand, in Fermionic theory ideal state discrimination through \ac{LOCC} is subject to non-trivial conditions.
In this Letter, we derived the conditions under which \ac{LOCC} discrimination achieves the ideal performances of unconstrained discrimination protocols.
However, in the Fermionic case, ancilla-assisted \ac{LOCC} protocols achieve ideal discrimination.
One has to remark, though, that this is the case only for maximally entangled ancillary states.
The former statement unequivocally determines the amount of entanglement required for such a task. 
We finally studied the behavior of optimal protocols---which depend on prior probabilities of the states to be discriminated---if the prior conditions are subject to perturbation.
A remarkable instability is observed, corresponding to a corner point in the curve representing the error probability excess due to non-optimized \acp{POVM}.
We stress that the latter phenomenon is not exclusive of Fermionic theory, as it occurs also in the quantum case.

\begin{acknowledgments}
	We thank Massimiliano F.~Sacchi for useful discussions and comments.
	A.~T. acknowledges financial support from Elvia and Federico Faggin foundation through Silicon Valley Community Foundation, Grant No. 2020-214365.
\end{acknowledgments}

\bibliography{addendum}

\bigskip
\appendix

\section{Separable effects}
In order to implement the parity superselection rule, the operator $0\leq S\leq I$ representing a separable effect supported on $\FockSpace_e(\rA\rB)$ must be of the form
\begin{equation}
	S = S_E + S_O, 
\end{equation}
where $S_E= \sum_i e_i \otimes e_i'$, $S_O=\sum_j o_j \otimes o_j'$, and $e_i, e_i', o_j, o_j' \ge 0$, with
\begin{align*}
	\Supp(e_i) &\subseteq \FockSpace_e (\rA) & \Supp(e_i') &\subseteq \FockSpace_e (\rB) \\
	\Supp(o_j) &\subseteq \FockSpace_o (\rA) & \Supp(o_j') &\subseteq \FockSpace_o (\rB) .
\end{align*}
Once the effect is applied to a Fermionic state $\tau \in \Stset(\rA\rB)$, the Born rule returns
\begin{equation}
	\Tr[\tau S] = \Tr[P_E \tau P_E S_E + P_O \tau P_O S_O] .
\end{equation}
The above expression shows that any separable \ac{POVM} operates on the $E$ and $O$ parts of $\tau$ independently.
In particular, in the proof of Theorem~\ref{th:orthogonal_discriminability} we seek the maximum of $r \coloneqq \Tr[(\rho - \sigma) S]$, with $\rho = \ket\psi\bra\psi$ and $\sigma = \ket\phi
\bra\phi$, i.e.,
\begin{align*}
	r =& \Tr\Big[\left( \ket{\psi_E} \bra{\psi_E} -  \ket{\phi_E} \bra{\phi_E}\right) S_E \\
	&+ \left( \ket{\psi_O} \bra{\psi_O} - \ket{\phi_O} \bra{\phi_O}\right) S_O\Big].
\end{align*}
The latter achieves unit value if and only if one can find $S_E$ and $S_O$ such that
\begin{align*}
    \Tr[\rho S] &= \Norm{\psi_E}^2 \braket{\tilde\psi_E | S_E | \tilde\psi_E} + \Norm{\psi_O}^2 \braket{\tilde\psi_O | S_O |\tilde\psi_O} = 1, \\
    \Tr[\sigma S] &= \Norm{\phi_E}^2 \braket{\tilde\phi_E | S_E | \tilde\phi_E} + \Norm{\phi_O}^2 \braket{\tilde\phi_O | S_O | \tilde\phi_O} = 0 , 
\end{align*}
where $\ket{\tilde\eta}\coloneqq\ket\eta/\Norm\eta$.
However, due to the hypotheses assumed so far, we achieve the above conditions if and only if Eq.~\eqref{eq:perfect_condition} is satisfied.

\section{Proof of Theorem~\ref{th:ancilla_assisted}}

Alice and Bob are provided with an entangled ancilla in the state $\ket{\omega}$, as in Eq.~\eqref{eq:ancilla}.
They now share two bipartite systems in the possible states $\ket{\psi'} = \ket{\psi} \otimes \ket{\omega}$ or $\ket{\phi'} = \ket{\phi} \otimes \ket{\omega}$, whose full expression can be obtained from 
\begin{equation}\label{eq:ancilla_pxi_prime}
	\begin{aligned}
		\ket{\psi'_E} &= a \sum_{i = 0}^n \ket{e_i 0}_\rA \ket{\eta_i^e 0}_\rB + b  \sum_{j = 0}^n \ket{o_j 1}_\rA \ket{\eta_j^o 1}_\rB \\
		\ket{\phi'_E} &= a \sum_{i = 0}^n \ket{e_i 0}_\rA \ket{\nu_i^e 0}_\rB + b  \sum_{j = 0}^n \ket{o_j 1}_\rA \ket{\nu_j^o 1}_\rB \\
        \ket{\psi'_O} &= b \sum_{i = 0}^n \ket{e_i 1}_\rA \ket{\eta_i^e 1}_\rB + a  \sum_{j = 0}^n \ket{o_j 0}_\rA \ket{\eta_j^o 0}_\rB \\
		\ket{\phi_O'} &= b \sum_{i = 0}^n \ket{e_i 1}_\rA \ket{\nu_i^e 1}_\rB + a\sum_{j = 0}^n \ket{o_j 0}_\rA \ket{\nu_j^o 0}_\rB .
    \end{aligned}
\end{equation}
Let $\Sigma_E\coloneqq \braket{\psi_E|\phi_E}$ and $\Sigma_O\coloneqq \braket{\psi_O|\phi_O}=-\Sigma_E$, where the last equality follows from the fact that  $\Sigma_E+\Sigma_O=\braket{\psi|\phi}=0$.
As shown in the body, there are cases where the ancilla is not needed, and clearly its presence cannot reduce the performances of \ac{LOCC} discrimination. The remaining case is that where  $\Sigma_E\neq0$.
The necessary and sufficient condition for perfect \ac{LOCC} discrimination between $\ket{\psi'}$ and $\ket{\phi'}$ of Eq.~\eqref{eq:even_discriminable_state} can then be written using Eq.~\eqref{eq:ancilla_pxi_prime} as 
\begin{equation*}
    \braket{\psi_E'|\phi_E'} = (\abs{a}^2 - \abs{b}^2)\Sigma_E=0.
\end{equation*}
For $\abs{a}^2 = \abs{b}^2 = \frac{1}{2}$ the above condition is clearly satisfied.
On the other hand, if $\Sigma_E\neq0$, discrimination by \ac{LOCC} is not possible for $\abs{a}\neq\abs{b}$.

\section{Extremal case for $\delta\Perr$ bound}
In the Letter we investigated the behavior of the discrimination error in the case where the prior probabilities slightly differ from the ideal ones.
We are given two pure states $\ket\psi$, $\ket\phi$ and if there exists a probability distribution $\{p, q\}$ such that condition~\eqref{eq:unstable} is satisfied, we proved that such a solution is unique and the optimal discrimination strategy is \ac{LOCC}-implementable, unless $[P_E,\ket\psi\bra\psi]=[P_E,\ket\phi\bra\phi]=0$.
Therefore, a small perturbation $\epsilon$ in the prior probability $p$ produces an increase of error probability of the \text{locc} protocol---which is optimized for the unperturbed case---with respect to the optimal \ac{LOCC} one.
For this purpose, we introduce the quantity
\begin{equation}\label{eq:Perr}
	\delta\Perr \coloneqq \Perr(\mathbb{P}^0 | \Delta^\epsilon) - \Perr^\LOCCSet(\Delta^\epsilon).
\end{equation}
Thanks to the triangle inequality we have that 
\begin{align*}
    \delta\Perr =&\frac12\left(\abs{\Norm{\Delta_E^\epsilon + \Delta_O^\epsilon}_1 - \Norm{\Delta^0}_1}\right. \\
    &\left.+\epsilon\Tr[(\ket\psi\bra\psi+\ket\phi\bra\phi)(\Pi_\phi-\Pi_\psi)]\right) \\
    &\le k\abs\epsilon+g\epsilon, \\
\intertext{for}
    k \coloneqq& \frac12\norm{\delta^\epsilon_E+\Delta^\epsilon_O-\Delta^0},\\
    g \coloneqq& \frac12\Tr[(\ket\psi\bra\psi+\ket\phi\bra\phi)(\Pi_\phi-\Pi_\psi)].
\end{align*}
Hence, the error difference $\delta\mathcal{P}_{\text{err}}$ is \textsl{sublinear}.

We numerically assessed that the bound above is indeed achieved by the states
\begin{align*}
	\ket\psi &= \frac{1}{\sqrt{2}} \ket{00} + \frac{1}{\sqrt{2}} \ket{11} \\
	\ket\phi &= \left(\frac{1}{\sqrt{2}} + \xi\right) \ket{00} + \frac{\gamma}{\sqrt2} \ket{11},
\end{align*}
where $\gamma \coloneqq \sqrt{1 - 2\sqrt{2} \xi - 2 \xi^2}$ and $\xi$ belongs to a neighborhood of zero.
The condition for optimality of Eq.~\eqref{eq:unstable} is fulfilled by
\begin{equation*}
	p(\xi) = \frac{\gamma + \sqrt2 \gamma \xi}{1 + \gamma + \sqrt2 \gamma \xi} \quad \text{for} \quad \xi \in [0, 1 - \sqrt2/2)
\end{equation*}
and the terms of Eq.~\eqref{eq:Perr} read
\begin{align*}
	\Perr(\mathbb{P}^0 | \Delta^\epsilon) &= \Tr[(p + \epsilon) \ket\psi\bra\psi \Pi_\phi + (q - \epsilon) \ket\phi\bra\phi \Pi_\psi] \\
	\Perr^\LOCCSet(\Delta^\epsilon) &= \frac{1}{2} (1 - \Norm{\Delta_E^\epsilon + \Delta_O^\epsilon}_1) .
\end{align*}
In Fig.~\ref{fig:delta_error} we show a plot of the quantity $\delta\Perr$ versus $\epsilon$ and $\xi$.
We observe that, letting $\xi$ vary in a neighborhood of 0 one gets arbitrarily close to the bound in Eq.~\eqref{eq:bound}.
On the other hand, the same analysis shows that one cannot find any lower bound for $\delta\Perr$ better than $\delta\Perr\geq0$.
Following exactly the same line as in the above derivation of the bound in Eq.~\eqref{eq:bound}, one can derive the bound in Eq.~\eqref{eq:boundlocc}.

\begin{figure}
	\centering
	\includegraphics[width=8.6cm]{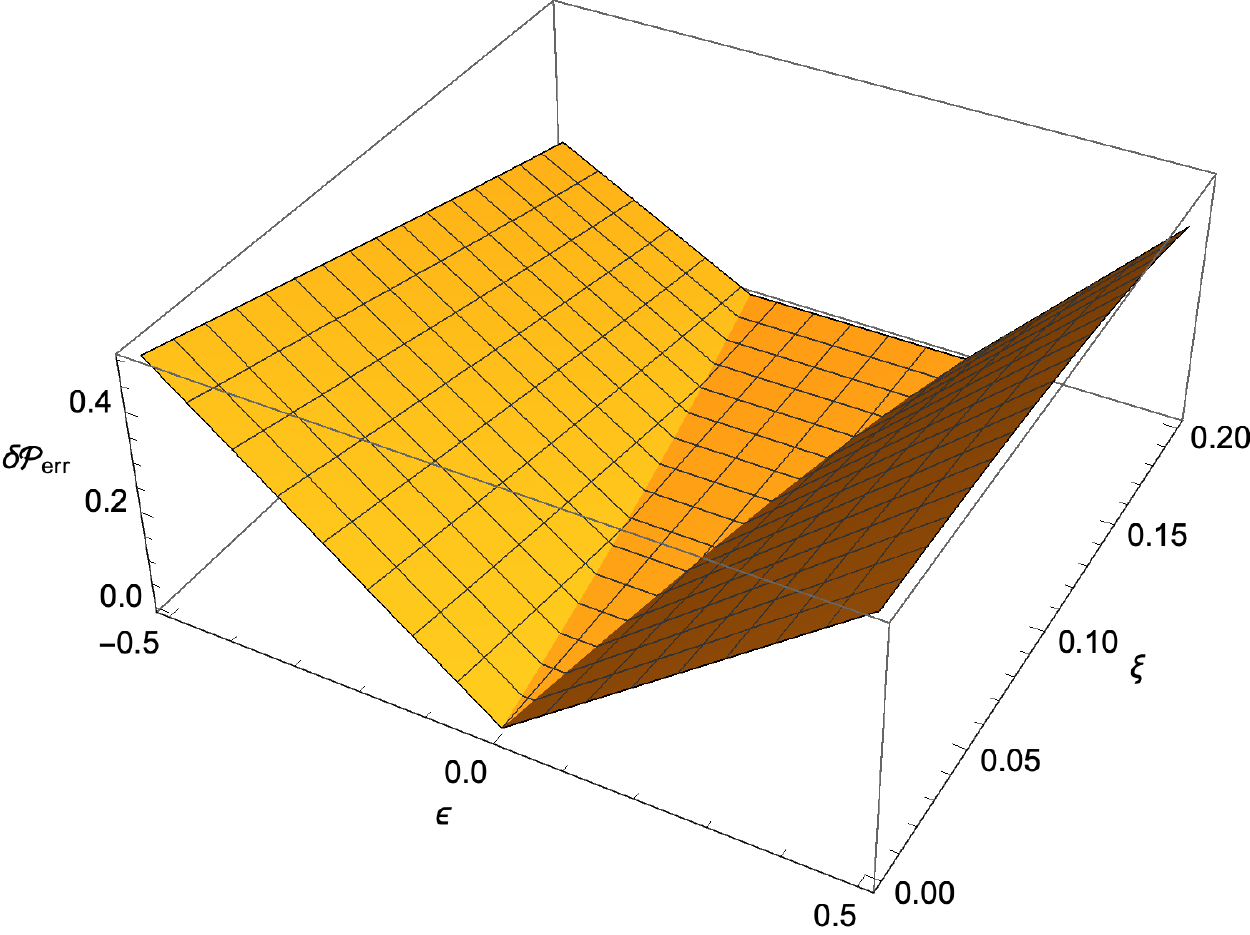}
	\caption{The plot shows the difference $\delta\Perr$ between the error probability $\Perr(\mathbb{P}^0 | \Delta^\epsilon)$ in discrimination between $\rho = (p+\epsilon)\ket\psi\bra\psi$ and $\sigma = (1-p-\epsilon)\ket\phi\bra\phi$ with the \ac{POVM} that is optimal for discrimination between $p\ket\psi\bra\psi$ and $(1-p)\ket\phi\bra\phi$ and the error probability $\Perr^\LOCCSet(\Delta^\epsilon)$ in discrimination between the same states $\rho$ and $\sigma$ with the correct \ac{LOCC}-optimal \ac{POVM}, as a function of $\epsilon$ and $\xi$, where $\ket\psi=1/{\sqrt{2}} (\ket{00} +\ket{11})$ and $\ket\phi = \alpha \ket{00} + \sqrt{1-\alpha^2}\ket{11}$, and $\alpha=1/{\sqrt{2}} + \xi$.
	The special value $p$ of the prior probability, corresponding to $\epsilon=0$, is such that $p\ket\psi\bra\psi$ and $(1-p)\ket\phi\bra\phi$ are ideally discriminable via \ac{LOCC}.
	All the other values of $\epsilon$, on the other hand, lead to pairs of states $\rho$ and $\sigma$ that cannot be ideally discriminated via \ac{LOCC}.
	For values of $\xi$ in a neighborhood of 0 the function $\delta\Perr$ gets arbitrarily close to the bound $\delta\Perr\leq k\abs\epsilon+g\epsilon$ for suitable constants $k$ and $g$.}
	\label{fig:delta_error}
\end{figure}
\end{document}